\newcommand{\gmd}{\gamma_{md}}
\newtheorem{theorem}{Theorem}[section]
\newtheorem{lemma}[theorem]{Lemma}
\theoremstyle{definition}
\newtheorem{definition}[theorem]{Definition}
\newtheorem{corollary}[theorem]{Corollary}
\theoremstyle{remark}
\newtheorem{remark}[theorem]{\bf Remark}
\tikzstyle{vertex}=[circle, draw, inner sep=0pt, minimum size=6pt]
\newcommand{\vertex}{\node[vertex]}
\title{On Mixed Domination in Generalized Petersen Graphs}
\author{M. Rajaati$^{1}$, M. R. Hooshmandasl$^{2}$,  M. Alambardar Meybodi$^{3}$ and B. Davvaz$^{4}$  \\
\footnotesize{$^{1,2,3}$Department of Computer Science, Yazd University, Yazd, Iran.}\\
\footnotesize{$^{4}$Department of Mathematics, Yazd University, Yazd, Iran.}\\
\footnotesize{$^{1,2,3}$The Laboratory of Quantum Information Processing, Yazd University, Yazd, Iran.}  \\
\footnotesize{e-mail: $^1$m.rajaati@stu.yazd.ac.ir, $^2$hooshmandasl@yazd.ac.ir, $^3$m.alambardar@stu.yazd.ac.ir, $^4$davvaz@yazd.ac.ir} }
\date{}
\begin{document}
\maketitle

\begin{abstract}		
	
	Given a graph $G = (V, E)$, a set $S \subseteq V \cup E$ of vertices and edges is called a mixed dominating set if every vertex and edge that is not included in $S$ happens to be adjacent or incident to a member of $S$. The mixed domination number $\gmd(G)$ of the graph  is the size of the smallest mixed dominating set of $G$.
	We present an explicit method for constructing optimal mixed dominating sets in Petersen graphs $P(n, k)$ for $k \in \{1, 2\}$. Our method also provides a new upper bound for other Petersen graphs.
	
	\noindent\textbf{Keywords:} Generalized Petersen Graph; Dominating Set; Mixed Dominating Set.
\end{abstract}
\section{Introduction and Preliminaries}

In a graph $G = (V, E)$, a set $S \subseteq V$ is called a dominating set if every vertex is either in $S$ or adjacent to a vertex in $S$. The size of the smallest dominating set of a graph $G$ is denoted by $\gamma(G)$ and every dominating set of this size is called a $\gamma$-set or optimal dominating set of $G$.

Domination is a natural model for many problems in operations research such as optimal placement of facilities like fire stations and hospitals. Minor differences in the characteristics of its many applications has led to various different extensions of the notion of domination. See \cite{hedetniemi1995domination} for a survey.
A well-studied example of these generalizations is mixed domination, which is motivated by a formulation of the problem of ensuring reliability in electric power networks \cite{zhao2011algorithmic}. 

Formally, a set $S \subseteq V \cup E$ of vertices and edges of a graph $G=(V, E)$ is called a mixed dominating set if every element $x \in (V \cup E) \setminus S $ is either adjacent or incident to an element of $S$.  The mixed domination number of $G$  is the size of the smallest mixed dominating set of $G$ and is denoted by $\gmd(G)$. 

The problem of computing $\gmd(G)$ is called the mixed domination problem, or more traditionally total cover problem, and was first introduced in 1977 \cite{alavi1977total}.
Its corresponding decision problem is NP-complete \cite{majumdar1992neighborhood}, and remains so even when restricted to bipartite and chordal graphs \cite{hedetniemi1995domination} or planar bipartite graphs of maximum degree four \cite{manlove1999algorithmic}. On the other hand, linear algorithms are known for some classes of graphs like trees and cacti \cite{zhao2011algorithmic,lan2013mixed}.

The generalized Petersen graph $P(n, k)$ with $k \leq \frac{n}{2}$ is defined to be a graph on $2n$ vertices with $V(P(n, k)) := \{v_i, u_i ~\vert~ 0 \leq i \leq n-1\}$ and $E(P(n, k)) := \{v_iv_{(i+1) \mod n}, v_iu_i, u_iu_{(i+k) \mod n} ~\vert~ 0 \leq i \leq n-1\}$. Intuitively, the $P(n, k)$ graph consists of an internal and an external part, each with $n$ vertices. The external part forms a cycle. In the internal part, there is an edge connecting each vertex to $k$ vertices ahead. Moreover, each vertex in the internal part is connected to its corresponding vertex in the external part. For example $P(10, 3)$ is illustrated in Figure~\ref{fig:sample}.

\begin{figure}
	\centering
	\resizebox*{.25\textwidth}{!}{
		\begin{tikzpicture}[rotate=90]
		\tikzstyle{VertexStyle} = [shape = classic]
		\tikzstyle{EdgeStyle}= [thin,double= black,double distance= 2pt]
		\SetVertexNormal[LineWidth=1pt]
		\grGeneralizedPetersen[Math,RA=5]{10}{3}
		
		\begin{scope}[]
		\node[shape=circle,draw=white] (1) at (5.75,0) {\LARGE $ v_0$};
		\node[shape=circle,draw=white] (2) at (4.5,-3.5) {\LARGE $v_1$};
		\node[shape=circle,draw=white] (3) at (1.75,-5.5) {\LARGE $v_2$};
		\node[shape=circle,draw=white] (4) at (-1.75,-5.5) {\LARGE $v_3$};
		\node[shape=circle,draw=white] (5) at (-4.5,-3.5) {\LARGE $v_4$};
		\node[shape=circle,draw=white] (6) at (-5.75,0) {\LARGE $v_5$};
		\node[shape=circle,draw=white] (7) at (-4.5,3.5) {\LARGE $v_6$};
		\node[shape=circle,draw=white] (8) at (-1.75,5.5) {\LARGE $v_7$};
		\node[shape=circle,draw=white] (9) at (1.75,5.5) {\LARGE $v_8$};
		\node[shape=circle,draw=white] (10) at (4.5,3.5) {\LARGE $v_9$};
		\node[shape=circle,draw=white] (11) at (2.25,0) {\LARGE $u_0$};.
		\node[shape=circle,draw=white] (12) at (1.8,-1.35) {\LARGE $u_1$};
		\node[shape=circle,draw=white] (13) at (0.75,-2.15) {\LARGE $u_2$};
		\node[shape=circle,draw=white] (14) at (-0.75,-2.15) {\LARGE $u_3$};
		\node[shape=circle,draw=white] (15) at (-1.8,-1.35) {\LARGE $u_4$};
		\node[shape=circle,draw=white] (16) at (-2.25,0) {\LARGE $u_5$};
		\node[shape=circle,draw=white] (17) at (-1.8,1.35) {\LARGE $u_6$};
		\node[shape=circle,draw=white] (18) at (-0.75,2.15) {\LARGE $u_7$};
		\node[shape=circle,draw=white] (19) at (0.75,2.15) {\LARGE $u_8$};
		\node[shape=circle,draw=white] (20) at (1.8,1.35) {\LARGE $u_9$};
		\end{scope}
		\end{tikzpicture}}
	\caption{P(10,3)}  
	\label{fig:sample}
\end{figure}

Domination and its variations have been studied extensively for generalized Petersen graphs $P(n,k)$ in recent years\cite{behzad2008domination,ebrahimi2009vertex,fu2009domination,yan2009exact}. It was shown in \cite{fu2009domination} that $\gamma(P(n,2))=n-\lfloor 
\frac{n}{5} \rfloor+\lceil \frac{n+2}{5}\rceil$. Yan et al. \cite{yan2009exact} proved $\gamma(P(2k+1,k))= \left \lceil \frac{3(2K+1)}{5}\right  \rceil $. However, similar results had not been established for mixed domination.
In the next sections, we propose a constructive method to obtain small mixed dominating sets in generalized Petersen graphs $P(n, k)$. Our approach is optimal for $k \leq 2$ and provides a new upper-bound for other cases.


We now define our notation and present a basic lemma that will be used in the rest of the paper.
Let $\upxi \in  V \cup E$ be an element in $G$. Its closed mixed neighborhood $N_m[\upxi] \subseteq V \cup E$
consists of $\upxi$ itself and all other vertices and edges that are adjacent or incident to $\upxi$. We say that $\upxi'$ is mixed dominated by $\upxi$ iff $\upxi' \in N_m[\upxi]$.
We also define the neighborhood of a set of elements as the union of the neighborhoods of its members. Hence, a set $S \subseteq V \cup E$ is a mixed dominating set iff $N_m[S] = V \cup E$. It is possible that some elements are dominated more than once by members of $S$. We introduce the notion of redomination number to capture such overlaps.  

\begin{definition}
	Let $S$ be a mixed dominating set of $G =(V, E)$. The redominaton number of an element $\upxi \in V\cup E$ is defined as
	$$rd_S(\upxi):=\vert N_m[\upxi]\cap S\vert-1$$
	also, for $X\subseteq V\cup E$ we set
	\[rd_S(X):=\sum_{\upxi \in X}rd_S(\upxi)\]
\end{definition}

For $n \geq 4$, each element of the graph $P(n,k)$ dominates exactly $7$ elements (its $6$ neighbors and itself). Given that $P(n, k)$ has $5n$ elements, i.e.~$2n$ vertices and $3n$ edges, it is clear that $\gmd(P(n,k))\geq \frac{5n}{7}$. Equality holds only if there exists a mixed dominating set $S$ such that every element of $P(n,k)$ is dominated exactly once. In this case, we would have $7\vert S\vert-5n=0$. This is clearly impossible, hence the inequality is always strict and our goal is to find a mixed dominating set $S$ of $P(n,k)$ that minimizes the difference $rd_S(V\cup E)=7|S|-5n$. This discussion naturally leads to the following lemma:

\begin{lemma}
	Let $S$ be an optimal mixed dominating set of $G$, then $\gmd(P(n,k)) = \vert S \vert =   \frac{5n+rd_S(V \cup E)}{7} .$
\end{lemma}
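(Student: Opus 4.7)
The plan is to argue by double counting the incidences between elements of $V\cup E$ and elements of $S$ under the closed mixed neighborhood relation. Fix an optimal mixed dominating set $S$ and consider the set of ordered pairs
\[
\mathcal{P} := \{(\upxi,\upxi') : \upxi \in V\cup E,\ \upxi' \in S,\ \upxi' \in N_m[\upxi]\}.
\]
The equality $\gmd(P(n,k)) = |S|$ is immediate from the optimality hypothesis, so the content of the lemma lies entirely in the identity $|S| = (5n + rd_S(V\cup E))/7$, and $\mathcal{P}$ will be counted in two ways to obtain it.

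First I would count $|\mathcal{P}|$ by summing over $\upxi \in V\cup E$. By definition of the redomination number, for every such $\upxi$ we have $|N_m[\upxi]\cap S| = rd_S(\upxi)+1$, and there are $|V\cup E| = 5n$ such elements. Thus
\[
|\mathcal{P}| \;=\; \sum_{\upxi \in V\cup E}\bigl(rd_S(\upxi)+1\bigr) \;=\; rd_S(V\cup E) + 5n.
\]
Second, I would count $|\mathcal{P}|$ by summing over $\upxi' \in S$. Here I use the regularity remark already made in the paragraph before the lemma: for $n\geq 4$ every vertex and every edge of $P(n,k)$ has closed mixed neighborhood of size exactly $7$ (it has six mixed neighbors plus itself), so $|N_m[\upxi']| = 7$ for every $\upxi' \in S$, giving $|\mathcal{P}| = 7|S|$.

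Equating the two expressions yields $7|S| = 5n + rd_S(V\cup E)$, and dividing by $7$ gives the claimed formula. The only thing to be careful about is that the constant $7$ really does hold for every element of $P(n,k)$; this is where the assumption $n\geq 4$ enters implicitly (to ensure that the external cycle and the internal structure introduce no duplicated neighbors), and it is the only potential pitfall in an otherwise routine double-counting argument.
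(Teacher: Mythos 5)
Your double-counting argument is correct and is essentially the paper's own reasoning: the lemma is derived there from the observation that each element of $P(n,k)$ mixed dominates exactly $7$ elements, which gives $rd_S(V\cup E)=7|S|-5n$, exactly the identity you obtain by counting the incidence pairs in two ways. You merely make the symmetry of the closed mixed neighborhood relation and the role of the hypothesis $n\geq 4$ explicit, which the paper leaves implicit.
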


Let $t$ be a natural number. We partition the vertex set of $P(n,k)$ into blocks of size $2t$ or less, each containing at most $t$ consecutive internal vertices and their corresponding external ones. We then call $t$ the partitioning factor. Formally, the $i$-th block for $0 \leq i \leq \lfloor \frac{n}{t} \rfloor$ consists of the following vertices:
$$V_i^t := {\{v_{ti+j}, u_{ti+j} ~\vert~ 0 \leq j < t,~~ ti+j < n\}}.$$

Corresponding to $V_i^t$, we define  $E^t_i$, $E^t_{i,i+1}$ and $G^t_i$ as follows: 
\begin{align*}
E^t_i &= \{ uv \in E ~\vert~ u,v \in V^t_i  \},\\
E^t_{i,i+1} &= \{ uv \in E ~\vert~ u \in V^t_i, v \in V^t_{i+1} \},\\
G[V_i^t] &=(V_i^t, E_i^t).
\end{align*}

Basically, $E_i^t$ is the set of edges between vertices in $V_i^t$, $E^t_{i, i+1}$ contains edges between $V_i^t$ and $V_{i+1}^t$ and $G[V_i^t]$ is the subgraph of $G$ induced by $V_i^t$.
\section{Mixed Domination in $P(n,k)$}
In this section we obtain exact formulas for the mixed domination number of generalized Petersen graphs $P(n, k)$ when $k \leq 2$ and establish an upper bound for the case where $k\geq 3$.
\subsection{First Case: k = 1}

Using a brute-force approach, we obtained optimal mixed dominating sets for $P(n, 1)$ where $n \leq 7$. 
Table~\ref{Mixed1-7} summarizes the mixed domination numbers of these graphs.

\begin{center}
	\begin{table}[h]
		\caption{number of needed elements to dominate}  \label{Mixed1-7}	
		\centering
		\begin{tabular}{|c|c|c|c|c|c|c|c|}
			\hline
			$n$ & 1 & 2 & 3 & 4 & 5 & 6& 7\\
			\hline
			$\gamma_{md}(P(n,1))$ & 1 & 2 & 3 & 4 & 4 & 5 & 6 \\
			\hline
		\end{tabular}  
	\end{table}
\end{center}

We now turn to the case where $n \geq 8$. In the following theorem, we set the partitioning factor $t$ to $8$ and let $m := \lfloor \frac{n}{8} \rfloor $ and $\bar{r} := n \mod 8$.

\begin{theorem}\label{Domination(n,1)}
	The mixed domination number of $P(n,1)$ where $n\geq 8$ is
	\begin{equation}\label{Eq-1}
	\gmd(P(n,1))=\left\{\begin{array}{llll}
	6m & & \text{if } & \bar{r}=0,  \\ \noalign{\medskip}
	6m+2 & & \text{if }& \bar{r}=1,2,  \\ \noalign{\medskip}
	6m+3 & &\text{if }& \bar{r}=3,  \\ \noalign{\medskip}
	6m+4 & &\text{if }& \bar{r}=4,5,  \\ \noalign{\medskip}
	6m+5 & &\text{if }& \bar{r}=6,  \\ \noalign{\medskip}
	6m+6 & &\text{if }& \bar{r}=7,  \\ \noalign{\medskip}
	\end{array}\right.
	\end{equation}
\end{theorem}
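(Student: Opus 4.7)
The plan is to prove the upper and lower bounds separately. For the upper bound, for each residue $\bar r \in \{0,1,\ldots,7\}$ I would construct a mixed dominating set of the claimed size by placing a periodic $6$-element pattern on each of the $m$ complete $8$-blocks $V_i^8$ and appending a small ad hoc configuration on the leftover $\bar r$ columns. The bulk pattern is a carefully chosen mixture of spokes $v_iu_i$, outer cycle edges $v_iv_{i+1}$, and inner cycle edges $u_iu_{i+1}$ designed so that every element of $V_i^8 \cup E_i^8 \cup E_{i,i+1}^8$ lies in $N_m[x]$ for some chosen $x$, with the neighbouring block $V_{i-1}^8$ handling the opposite boundary. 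Once the six elements are written down explicitly (together with a picture), verification that every vertex and every edge of the block or its right boundary is dominated is a routine check on $40$ elements. The tails for $\bar r \geq 1$ simply add $c(\bar r) \in \{2,2,3,4,4,5,6\}$ elements adapted to each of the seven small cases.

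For the lower bound I would invoke the preceding lemma, which reduces the claim $|S| \geq 6m+c(\bar r)$ to the inequality $rd_S(V\cup E) \geq 7(6m+c(\bar r))-5n$. The plan is to partition $V\cup E$ into the $m$ blocks of $40$ elements each together with a residue piece of size $5\bar r$, and to argue that each full block contributes at least $2$ to the total redomination, with an extra case-dependent contribution from the residue piece. Since each element of $S$ dominates exactly $7$ elements, a block containing six $S$-elements can cover at most $42$ slots inside the block, so at least $2$ elements of the block must be dominated twice; a case analysis on the number and type of $S$-elements handles the remaining configurations (too few elements force an undominated element, contradicting the domination hypothesis, while more than six trivially beats the bound). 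The residue piece is analyzed case by case for $\bar r\in\{1,\ldots,7\}$ along the same lines.

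The main obstacle will be the lower-bound accounting. Elements of $S$ in one block can help dominate elements of its neighbours via spokes and via cycle edges of $E_{i,i+1}^8$ that straddle a boundary, so the ``at least $2$ redomination per block'' statement is not literally true block-by-block and requires a careful bookkeeping rule that assigns each dominating relation to a single block before summing. Combined with the seven separate residue cases and the need to rule out various near-extremal configurations explicitly, this per-block analysis is where most of the work lies.
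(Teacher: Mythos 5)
Your upper bound is essentially the paper's: a fixed $6$-element pattern on each full $8$-block plus an ad hoc tail of $c(\bar r)$ elements, and that half is routine once the pattern is written down. The genuine gap is in the lower bound. The counting step you lean on --- ``a block containing six $S$-elements can cover at most $42$ slots inside the block, so at least $2$ elements of the block must be dominated twice'' --- forces nothing: an $S$-element near a block boundary spends part of its $7$-element coverage outside the block, and block elements near the boundary can be covered from the neighbouring blocks, so a block with six $S$-elements may have zero internal redomination, the excess being exported across the seam. Likewise ``too few elements force an undominated element'' is false as stated: all five left-boundary elements $v_{8i},u_{8i},v_{8i}u_{8i},v_{8i}v_{8i+1},u_{8i}u_{8i+1}$ can be dominated by $S$-elements belonging to the previous block. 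Globally, pure counting cannot even exclude $|S|=6m-1$ when $8\mid n$, since $7(6m-1)\ge 40m$ for all $m\ge 4$; some structural forcing argument is unavoidable. The paper supplies it by classifying the possible ways the previous block can pre-dominate the left boundary of a $4$-block (its six cases (a)--(f)) and showing each one forces at least one redomination inside $G[V_i^4]\cup E^4_{i,i+1}$. Your ``careful bookkeeping rule that assigns each dominating relation to a single block'' is precisely this missing case analysis; it is the heart of the proof, not an accounting afterthought, and your sketch does not contain the idea that makes it work (conditioning on the boundary state inherited from the previous block).

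The second, larger omission is the residue analysis, which is where the constants $2,2,3,4,4,5,6$ actually come from. Writing $n=8m+\bar r$, your own reduction requires $rd_S(V\cup E)\ge 7(6m+c(\bar r))-5n=2m+7c(\bar r)-5\bar r$, i.e.\ between $2m+3$ and $2m+9$ depending on $\bar r$ (integrality of $|S|$ lowers the needed threshold, but it stays case-dependent). So beyond the $2m$ redominations from full blocks you must force between $3$ and $9$ more, attributable to a leftover piece of only $5\bar r$ elements, and this cannot be read off the residue in isolation: you must know which boundary elements a near-optimal solution can leave for the residue to absorb. The paper handles this with an input/output state argument: it lists the admissible inputs $\{v_0\},\{u_0\},\{v_0v_1\},\{u_0u_1\}$ of a block under the constraint of at most one redomination per block, shows that the corresponding outputs reproduce the same state from block to block around the cycle, and then tabulates exactly which elements near the seam remain undominated in each case, from which the extra $c(\bar r)$ elements are counted (its Tables 2--3). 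Your sketch replaces all of this with ``analyzed case by case along the same lines,'' but the same counting lines do not yield it; you would need to add a propagation (transfer-matrix style) argument of this kind to pin the seven constants down.
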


\begin{proof}
	
	We first prove that the given values are an upper-bound for $\gmd(P(n, 1))$ and then prove that they are also a lower-bound.
	To settle the former, we construct a mixed dominating set of the desired size.
	
	We first handle the vertices and edges in each of the full blocks of the form $V^8_i$, i.e.~those blocks that contain exactly $16$ vertices. We use the following set, noted as $M$, as the subset of our mixed dominating set that lies within these full blocks:
	\[
	M=\{u_{8i},v_{8i+1}v_{8i+2},u_{8i+2}u_{8i+3},v_{8i+4},u_{8i+5}u_{8i+6}, v_{8i+6}v_{8i+7} :  0\leq i\leq m-1\}\]
	
	Figure~\ref{fig-P(10,1)-block8} provides an example of this set in $P(10, 1)$.
	
	\newcommand{\cercle}[6]{
		\node[circle,inner sep=0,minimum size={2*#2}](a) at (#1) {};
		\draw[#6,line width=#5] (a.#3) arc (#3:{#3+#4}:#2);
	}
	\begin{center}
		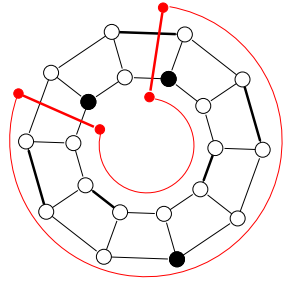
\begin{figure}[h]
			\centering
			\resizebox*{.3\textwidth}{!}{
				\begin{tikzpicture}[rotate=70,-,>=stealth',shorten >=2pt,auto,node distance=4cm,main node/.style={circle,draw,font=\sffamily\Large\bfseries}]
				
				\tikzstyle{VertexStyle} = [shape = classic]
				\tikzstyle{EdgeStyle}= [thin,double= black,double distance= 2pt]
				\SetVertexNoLabel
				\SetVertexNormal[LineWidth=1pt]
				\grGeneralizedPetersen[Math,RA=5]{10}{1}
				\AddVertexColor{black}{b0,a6,b2}
				\draw[draw=black,line width=3pt]
				(a9) to (a8);
				\draw[draw=black,line width=3pt]
				(b8) to (b7);
				\draw[draw=black,line width=3pt]
				(b5) to (b4);
				\draw[draw=black,line width=3pt]
				(a4) to (a3);
				\draw[draw=black,line width=3pt]
				(a1) to (a0);
				
				\node[main node,draw=red,fill=red] at (5.75,1.25) (1) {};
				\node[main node,draw=red,fill=red] at (2,0.5) (2)  {};
				\node[main node,draw=red,fill=red] at (0.25,5.75) (3) {};
				\node[main node,draw=red,fill=red] at (0,2) (4)  {};
				\draw[draw=red,line width=3pt]
				(1) to (2);
				\draw[draw=red,line width=3pt](3) to (4);
				
				\coordinate (OR) at (3, 6);
				\cercle {OR} {5.75cm} {10} {-280} {1.00} {red};
				
				\coordinate (OR) at (1, 2);
				\cercle {OR} {2cm} {10} {-280} {1.00} {red};
				\end{tikzpicture}}
			\caption{Elements of the set $M$ are shown in bold in the red area.}\label{fig-P(10,1)-block8}
		\end{figure}       
	\end{center}
	Now, we complement the set $M$ with additional elements from the final block to obtain the desired mixed dominating set $S$ as follows:
	\[
	S=\left\{\begin{array}{lll}
	M & \text{if } & \bar{r}=0,  \\ \noalign{\medskip}
	M\cup \{u_{8m},v_{8m}v_{0}\} & \text{if }& \bar{r}=1,  \\ \noalign{\medskip}
	M\cup \{u_{8m},v_{8m+1}v_{0}\} & \text{if }& \bar{r}=2,  \\ \noalign{\medskip}
	M\cup \{u_{8m},v_{8m+1}v_{8m+2},u_{8m+1}u_{8m+2} \} & \text{if }& \bar{r}=3,  \\ \noalign{\medskip}
	M\cup \{u_{8m}, v_{8m+1}v_{8m+2},u_{8m+2}u_{8m+3},v_{8m+3} \} & \text{if }& \bar{r}=4,  \\ \noalign{\medskip}
	M\cup \{u_{8m},v_{8m+1}v_{8m+2},u_{8m+2}u_{8m+3},v_{8m+4} \} & \text{if }& \bar{r}=5,  \\ \noalign{\medskip}
	M\cup \{u_{8m},v_{8m+1}v_{8m+2},u_{8m+2}u_{8m+3},v_{8m+4},u_{8m+5}v_{8m+5} \} & \text{if }& \bar{r}=6,  \\ \noalign{\medskip}
	M \cup \{u_{8m},v_{8m+1}v_{8m+2},u_{8m+2}u_{8m+3},v_{8m+4},u_{8m+5}u_{8m+6},v_{8m+6}v_{0} \}  & \text{if }& \bar{r}=7,  \\ \noalign{\medskip}
	\end{array}\right.
	\]
	It is easy to see that $S$ is a mixed dominating set with  the required cardinality. Figure~\ref{fig-P(n,1)} provides examples of our construction on $P(n, 1)$ for $8 \leq n \leq 15$.
	
	\begin{center}
		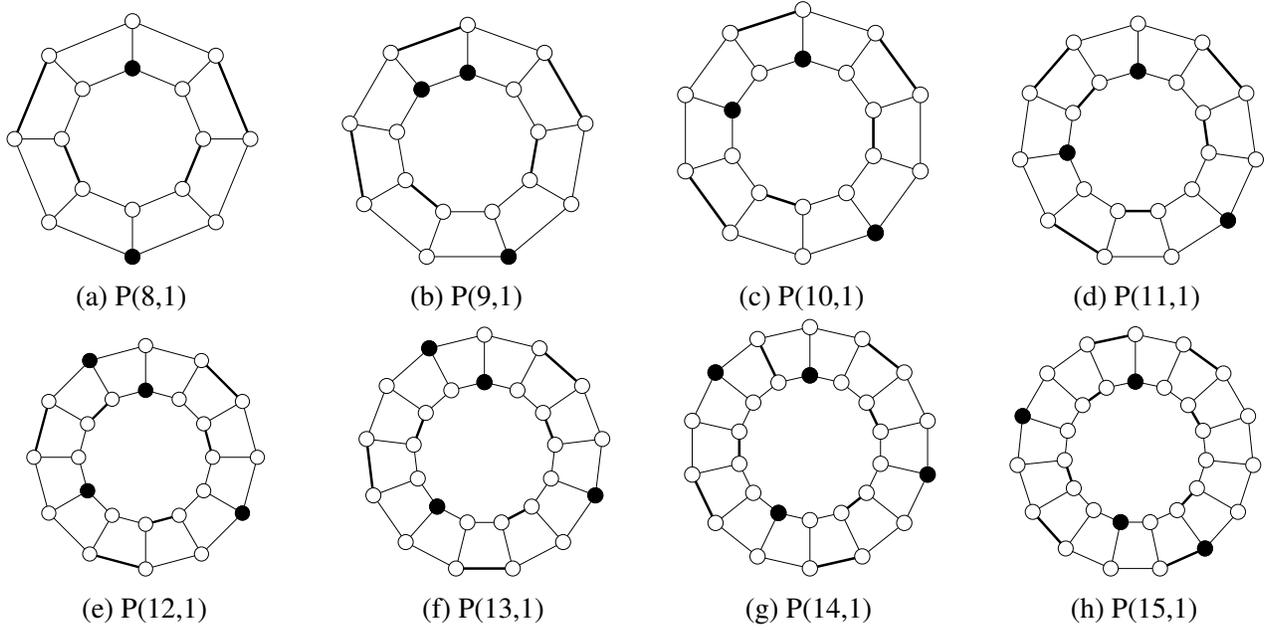
\begin{figure}[h!]
			\begin{subfigure}[b]{0.24\textwidth}
				\centering
				\resizebox*{.8\textwidth}{!}{
					\begin{tikzpicture}[rotate=90]
					\tikzstyle{VertexStyle} = [shape = classic]
					\tikzstyle{EdgeStyle}= [thin,double= black,double distance= 2pt]
					\SetVertexNoLabel
					\SetVertexNormal[LineWidth=1pt]
					\grGeneralizedPetersen[Math,RA=5]{8}{1}
					\AddVertexColor{black}{b0,a4}
					\draw[draw=black,line width=3pt]
					(a1) to (a2);
					\draw[draw=black,line width=3pt]
					(b2) to (b3);
					\draw[draw=black,line width=3pt]
					(b5) to (b6);
					\draw[draw=black,line width=3pt]
					(a6) to (a7);
					\end{tikzpicture}}
				\caption{P(8,1)}             
			\end{subfigure}
			\begin{subfigure}[b]{0.24\textwidth}
				\centering
				\resizebox*{.8\textwidth}{!}{
					\begin{tikzpicture}[rotate=90]
					\tikzstyle{VertexStyle} = [shape = classic]
					\tikzstyle{EdgeStyle}= [thin,double= black,double distance= 2pt]
					\SetVertexNoLabel
					\SetVertexNormal[LineWidth=1pt]
					\grGeneralizedPetersen[Math,RA=5]{9}{1}
					\AddVertexColor{black}{b0,a5,b1}
					\draw[draw=black,line width=3pt]
					(a7) to (a8);
					\draw[draw=black,line width=3pt]
					(b6) to (b7);
					\draw[draw=black,line width=3pt]
					(b4) to (b3);
					\draw[draw=black,line width=3pt]
					(a3) to (a2);
					\draw[draw=black,line width=3pt]
					(a1) to (a0);
					\end{tikzpicture}}
				\caption{P(9,1)}          
			\end{subfigure}
			\begin{subfigure}[b]{0.24\textwidth}
				\centering
				\resizebox*{.8\textwidth}{!}{
					\begin{tikzpicture}[rotate=90]
					\tikzstyle{VertexStyle} = [shape = classic]
					\tikzstyle{EdgeStyle}= [thin,double= black,double distance= 2pt]
					\SetVertexNoLabel
					\SetVertexNormal[LineWidth=1pt]
					\grGeneralizedPetersen[Math,RA=5]{10}{1}
					\AddVertexColor{black}{b0,a6,b2}
					\draw[draw=black,line width=3pt]
					(a9) to (a8);
					\draw[draw=black,line width=3pt]
					(b8) to (b7);
					\draw[draw=black,line width=3pt]
					(b5) to (b4);
					\draw[draw=black,line width=3pt]
					(a4) to (a3);
					\draw[draw=black,line width=3pt]
					(a0) to (a1);
					\end{tikzpicture}}
				\caption{P(10,1)} \label{c}               
			\end{subfigure}
			\begin{subfigure}[b]{0.24\textwidth}
				\centering
				\resizebox*{.8\textwidth}{!}{
					\begin{tikzpicture}[rotate=90]
					\tikzstyle{VertexStyle} = [shape = classic]
					\tikzstyle{EdgeStyle}= [thin,double= black,double distance= 2pt]
					\SetVertexNoLabel
					\SetVertexNormal[LineWidth=1pt]
					\grGeneralizedPetersen[Math,RA=5]{11}{1}
					\AddVertexColor{black}{b0,a7,b3}
					\draw[draw=black,line width=3pt]
					(a10) to (a9);
					\draw[draw=black,line width=3pt]
					(b9) to (b8);
					\draw[draw=black,line width=3pt]
					(b6) to (b5);
					\draw[draw=black,line width=3pt]
					(a5) to (a4);
					\draw[draw=black,line width=3pt]
					(b2) to (b1);
					\draw[draw=black,line width=3pt]
					(a2) to (a1);
					\end{tikzpicture}}                                       
				\caption{P(11,1)}                
			\end{subfigure}
			\begin{subfigure}[b]{0.26\textwidth}
				\centering
				\resizebox*{.7\textwidth}{!}{
					\begin{tikzpicture}[rotate=90]
					\tikzstyle{VertexStyle} = [shape = classic]
					\tikzstyle{EdgeStyle}= [thin,double= black,double distance= 2pt]
					\SetVertexNoLabel
					\SetVertexNormal[LineWidth=1pt]
					\grGeneralizedPetersen[Math,RA=5]{12}{1}
					\AddVertexColor{black}{b0,a8,b4,a1}
					\draw[draw=black,line width=3pt]
					(a11) to (a10);
					\draw[draw=black,line width=3pt]
					(b10) to (b9);
					\draw[draw=black,line width=3pt]
					(b7) to (b6);
					\draw[draw=black,line width=3pt]
					(a6) to (a5);
					\draw[draw=black,line width=3pt]
					(a3) to (a2);
					\draw[draw=black,line width=3pt]
					(b2) to (b1);

					\end{tikzpicture}}
				\caption{P(12,1)}          
			\end{subfigure}
			\begin{subfigure}[b]{0.24\textwidth}
				\centering
				\resizebox*{.8\textwidth}{!}{
					\begin{tikzpicture}[rotate=90]
					\tikzstyle{VertexStyle} = [shape = classic]
					\tikzstyle{EdgeStyle}= [thin,double= black,double distance= 2pt]
					\SetVertexNoLabel
					\SetVertexNormal[LineWidth=1pt]
					\grGeneralizedPetersen[Math,RA=5]{13}{1}
					\AddVertexColor{black}{b0,a9,b5,a1}
					\draw[draw=black,line width=3pt]
					(a12) to (a11);
					\draw[draw=black,line width=3pt]
					(b11) to (b10);
					\draw[draw=black,line width=3pt]
					(b8) to (b7);
					\draw[draw=black,line width=3pt]
					(a7) to (a6);
					\draw[draw=black,line width=3pt]
					(a4) to (a3);
					\draw[draw=black,line width=3pt]
					(b3) to (b2);

					\end{tikzpicture}}
				\caption{P(13,1)}          
			\end{subfigure}
			\begin{subfigure}[b]{0.24\textwidth}
				\centering
				\resizebox*{.8\textwidth}{!}{
					\begin{tikzpicture}[rotate=90]
					\tikzstyle{VertexStyle} = [shape = classic]
					\tikzstyle{EdgeStyle}= [thin,double= black,double distance= 2pt]
					\SetVertexNoLabel
					\SetVertexNormal[LineWidth=1pt]
					\grGeneralizedPetersen[Math,RA=5]{14}{1}
					\AddVertexColor{black}{b0,a10,b6,a2}
					\draw[draw=black,line width=3pt]
					(a13) to (a12);
					\draw[draw=black,line width=3pt]
					(b12) to (b11);
					\draw[draw=black,line width=3pt]
					(b9) to (b8);
					\draw[draw=black,line width=3pt]
					(a8) to (a7);
					\draw[draw=black,line width=3pt]
					(a5) to (a4);
					\draw[draw=black,line width=3pt]
					(b4) to (b3);
					\draw[draw=black,line width=3pt]
					(a1) to (b1);
					
					\end{tikzpicture}}
				\caption{P(14,1)}          
			\end{subfigure}
			\begin{subfigure}[b]{0.24\textwidth}
				\centering
				\resizebox*{.8\textwidth}{!}{
					\begin{tikzpicture}[rotate=90]
					\tikzstyle{VertexStyle} = [shape = classic]
					\tikzstyle{EdgeStyle}= [thin,double= black,double distance= 2pt]
					\SetVertexNoLabel
					\SetVertexNormal[LineWidth=1pt]
					\grGeneralizedPetersen[Math,RA=5]{15}{1}
					\AddVertexColor{black}{b0,a9,b7,a3}
					\draw[draw=black,line width=3pt]
					(a14) to (a13);
					\draw[draw=black,line width=3pt]
					(b13) to (b12);
					\draw[draw=black,line width=3pt]
					(b10) to (b9);
					\draw[draw=black,line width=3pt]
					(a9) to (a8);
					\draw[draw=black,line width=3pt]
					(a6) to (a5);
					\draw[draw=black,line width=3pt]
					(b5) to (b4);
					\draw[draw=black,line width=3pt]
					(b2) to (b1);
					\draw[draw=black,line width=3pt]
					(a1) to (a0);
					
					\end{tikzpicture}}
				\caption{P(15,1)}          
			\end{subfigure}
			\caption{Optimal mixed dominating sets in $P(n,1)$, where $8\leq n \leq 15$. Elements of the mixed dominating set are shown in bold.}\label{fig-P(n,1)}
		\end{figure}       
	\end{center}

	To settle the other part, we need a more fine-grained partitioning. To simplify the proof process, we consider a partitioning factor of $4$. We prove that for any optimal mixed dominating set $S$, in every set of elements of the form $G[V_i^4]\cup E^4_{i,i+1}$ where $0\leq i \leq \lfloor \frac{n}{4}\rfloor$, which corresponds to one of our blocks and the edges between it and the next block, the redomination number is at least one and we need at least three dominating elements. For convenience, we denote $G[V_i^4]\cup E^4_{i,i+1}$ simply by $G_i$. Similarly, we define $S_i := S \cap G_i$.

	Let $G_{i-1}$ and $G_i$ be two consecutive blocks. 
	Although $G_{i-1}$ and $G_i$ are disjoint, since they are connected, some elements of $G_i$ could be dominated by $S_{i-1}$.
	Depending on $S_{i-1}$, which we consider to be the emptyset when $i = 0$, one of the following six cases can happen. In each case, we show that every optimal mixed dominating set has a redomination number of at least one in $G_i$. This is illustrated in Figure~\ref{sixcases} below.
	
	\begin{center}
		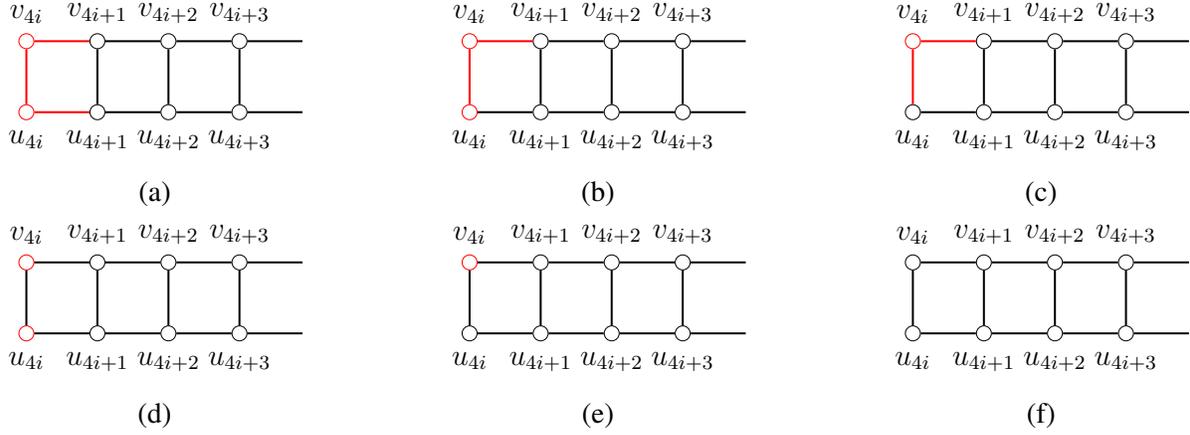
\begin{figure}[h]
			\begin{subfigure}[b]{.33\textwidth}
				\centering
				\resizebox*{.75\textwidth}{!}{
					\begin{tikzpicture}
					\vertex[label=$v_{4i}$,color=red](p1) at (0,0) {};
					\vertex[label=$v_{4i+1}$](p2) at (1,0) {};
					\vertex[label=$v_{4i+2}$](p3) at (2,0) {};
					\vertex[label=$v_{4i+3}$](p4) at (3,0) {};
					\vertex[color=white]  (p0) at (4,0) {};
					\vertex[label=below:$u_{4i}$,color=red](p5) at (0,-1) {};
					\vertex[label=below:$u_{4i+1}$](p6) at (1,-1) {};
					\vertex[label=below:$u_{4i+2}$](p7) at (2,-1) {};
					\vertex[label=below:$u_{4i+3}$](p8) at (3,-1) {}; 
					\vertex[color=white]  (p9) at (4,-1) {};
					\tikzset{EdgeStyle/.style={-}}
					\Edge[color=red](p1)(p2)
					\Edge(p2)(p3)
					\Edge(p3)(p4)
					\Edge(p4)(p0)
					\Edge[color=red](p1)(p5)
					\Edge(p2)(p6)
					\Edge(p3)(p7)
					\Edge(p4)(p8)
					\Edge[color=red](p5)(p6)
					\Edge(p6)(p7)
					\Edge(p7)(p8)
					\Edge(p8)(p9)
					\end{tikzpicture}}
				\caption{}    \label{a}            
			\end{subfigure}
			\begin{subfigure}[b]{0.33\textwidth}
				\centering
				\resizebox*{.75\textwidth}{!}{
					\begin{tikzpicture}
					\vertex[label=$v_{4i}$,color=red](p1) at (0,0) {};
					\vertex[label=$v_{4i+1}$](p2) at (1,0) {};
					\vertex[label=$v_{4i+2}$](p3) at (2,0) {};
					\vertex[label=$v_{4i+3}$](p4) at (3,0) {};
					\vertex[color=white]  (p0) at (4,0) {};
					\vertex[label=below:$u_{4i}$,color=red](p5) at (0,-1) {};
					\vertex[label=below:$u_{4i+1}$](p6) at (1,-1) {};
					\vertex[label=below:$u_{4i+2}$](p7) at (2,-1) {};
					\vertex[label=below:$u_{4i+3}$](p8) at (3,-1) {}; 
					\vertex[color=white]  (p9) at (4,-1) {};
					\tikzset{EdgeStyle/.style={-}}
					\Edge[color=red](p1)(p2)
					\Edge(p2)(p3)
					\Edge(p3)(p4)
					\Edge(p4)(p0)
					
					\Edge[color=red](p1)(p5)
					\Edge(p2)(p6)
					\Edge(p3)(p7)
					\Edge(p4)(p8)

					\Edge(p5)(p6)
					\Edge(p6)(p7)
					\Edge(p7)(p8)
					\Edge(p8)(p9)
					\end{tikzpicture}}
				\caption{}      \label{b}          
			\end{subfigure}
			\begin{subfigure}[b]{0.33\textwidth}
				\centering
				\resizebox*{.75\textwidth}{!}{
					\begin{tikzpicture}
					\vertex[label=$v_{4i}$,color=red](p1) at (0,0) {};
					\vertex[label=$v_{4i+1}$](p2) at (1,0) {};
					\vertex[label=$v_{4i+2}$](p3) at (2,0) {};
					\vertex[label=$v_{4i+3}$](p4) at (3,0) {};
					\vertex[color=white]  (p0) at (4,0) {};
					\vertex[label=below:$u_{4i}$](p5) at (0,-1) {};
					\vertex[label=below:$u_{4i+1}$](p6) at (1,-1) {};
					\vertex[label=below:$u_{4i+2}$](p7) at (2,-1) {};
					\vertex[label=below:$u_{4i+3}$](p8) at (3,-1) {}; 
					\vertex[color=white]  (p9) at (4,-1) {};
					\tikzset{EdgeStyle/.style={-}}
					\Edge[color=red](p1)(p2)
					\Edge(p2)(p3)
					\Edge(p3)(p4)
					\Edge(p4)(p0)
					
					\Edge[color=red](p1)(p5)
					\Edge(p2)(p6)
					\Edge(p3)(p7)
					\Edge(p4)(p8)

					\Edge(p5)(p6)
					\Edge(p6)(p7)
					\Edge(p7)(p8)
					\Edge(p8)(p9)
					\end{tikzpicture}}
				\caption{} \label{c}               
			\end{subfigure}
			\begin{subfigure}[b]{0.33\textwidth}
				\centering
				\resizebox*{.75\textwidth}{!}{
					\begin{tikzpicture}
					\vertex[label=$v_{4i}$,color=red](p1) at (0,0) {};                                               \vertex[label=$v_{4i+1}$](p2) at (1,0) {};
					\vertex[label=$v_{4i+2}$](p3) at (2,0) {};
					\vertex[label=$v_{4i+3}$](p4) at (3,0) {};
					\vertex[color=white]  (p0) at (4,0) {};
					\vertex[label=below:$u_{4i}$,color=red](p5) at (0,-1) {};
					\vertex[label=below:$u_{4i+1}$](p6) at (1,-1) {};
					\vertex[label=below:$u_{4i+2}$](p7) at (2,-1) {};
					\vertex[label=below:$u_{4i+3}$](p8) at (3,-1) {}; 
					\vertex[color=white]  (p9) at (4,-1) {};
					\tikzset{EdgeStyle/.style={-}}
					\Edge(p1)(p2)
					\Edge(p2)(p3)
					\Edge(p3)(p4)
					\Edge(p4)(p0)
					
					\Edge(p1)(p5)
					\Edge(p2)(p6)
					\Edge(p3)(p7)
					\Edge(p4)(p8)
					\Edge(p5)(p6)
					\Edge(p6)(p7)
					\Edge(p7)(p8)
					\Edge(p8)(p9)
					\end{tikzpicture}}
				\caption{} \label{d}             
			\end{subfigure}                           
			\begin{subfigure}[b]{0.33\textwidth}
				\centering
				\resizebox*{.75\textwidth}{!}{
					\begin{tikzpicture}
					\vertex[label=$v_{4i}$,color=red](p1) at (0,0) {};
					\vertex[label=$v_{4i+1}$](p2) at (1,0) {};
					\vertex[label=$v_{4i+2}$](p3) at (2,0) {};
					\vertex[label=$v_{4i+3}$](p4) at (3,0) {};
					\vertex[color=white]  (p0) at (4,0) {};
					\vertex[label=below:$u_{4i}$](p5) at (0,-1) {};
					\vertex[label=below:$u_{4i+1}$](p6) at (1,-1) {};
					\vertex[label=below:$u_{4i+2}$](p7) at (2,-1) {};
					\vertex[label=below:$u_{4i+3}$](p8) at (3,-1) {}; 
					\vertex[color=white]  (p9) at (4,-1) {};
					\tikzset{EdgeStyle/.style={-}}
					\Edge(p1)(p2)
					\Edge(p2)(p3)
					\Edge(p3)(p4)
					\Edge(p4)(p0)
					
					\Edge(p1)(p5)
					\Edge(p2)(p6)
					\Edge(p3)(p7)
					\Edge(p4)(p8)

					\Edge(p5)(p6)
					\Edge(p6)(p7)
					\Edge(p7)(p8)
					\Edge(p8)(p9)
					\end{tikzpicture}}
				\caption{}\label{e}
				\label{fig:LU-a2}
			\end{subfigure}           
			\begin{subfigure}[b]{0.33\textwidth}
				\centering
				\resizebox*{.75\textwidth}{!}{
					\begin{tikzpicture}
					\vertex[label=$v_{4i}$](p1) at (0,0) {};
					\vertex[label=$v_{4i+1}$](p2) at (1,0) {};
					\vertex[label=$v_{4i+2}$](p3) at (2,0) {};
					\vertex[label=$v_{4i+3}$](p4) at (3,0) {};
					\vertex[color=white]  (p0) at (4,0) {};
					\vertex[label=below:$u_{4i}$](p5) at (0,-1) {};
					\vertex[label=below:$u_{4i+1}$](p6) at (1,-1) {};
					\vertex[label=below:$u_{4i+2}$](p7) at (2,-1) {};
					\vertex[label=below:$u_{4i+3}$](p8) at (3,-1) {}; 
					\vertex[color=white]  (p9) at (4,-1) {};
					\tikzset{EdgeStyle/.style={-}}
					\Edge(p1)(p2)
					\Edge(p2)(p3)
					\Edge(p3)(p4)
					\Edge(p4)(p0)
					
					\Edge(p1)(p5)
					\Edge(p2)(p6)
					\Edge(p3)(p7)
					\Edge(p4)(p8)

					\Edge(p5)(p6)
					\Edge(p6)(p7)
					\Edge(p7)(p8)
					\Edge(p8)(p9)
					\end{tikzpicture}}
				\caption{}\label{f}
				\label{fig:LU-a2}
			\end{subfigure}            
			\caption{The six ways in which $S_{i-1}$ can dominate parts of $G_i$.}\label{sixcases}
		\end{figure}       
	\end{center}

	\begin{itemize}
		\item[(a)] The set $\{v_{4i},u_{4i},v_{4i}u_{4i},v_{4i}v_{4i+1},u_{4i}u_{4i+1}\}$ is dominated by some elements of $S_{i-1}$. To dominate $v_{4i+1}u_{4i+1}$, we must choose one element in $\{v_{4i+1},u_{4i+1},v_{4i+1}u_{4i+1},v_{4i+1}v_{4i+2},u_{4i+1}u_{4i+2}\}$. Any selection leads to at least one redomination, i.e.~ $rd_S(G_i)\geq 1$.
		
		\item[(b)]The set $\{v_{4i},u_{4i},v_{4i}u_{4i},v_{4i}v_{4i+1}\}$ is dominated by some elements of $S_{i-1}$. Each element that can dominate $v_{4i+1}u_{4i+1}$, increases the redomination number except $ u_{4i+2}u_{4i+3}$. By selecting this element, the vertex $v_{4i+1}$ remains undominated and then any choice to dominate it leads to some redomination.
		By symmetry, it is obvious that a similar result holds if $\{v_{4i},u_{4i},v_{4i}u_{4i},v_{4i}v_{4i+1}\} $ is dominated by $S_{i-1}$.
		
		\item[(c)] The set 
		$\{v_{4i},v_{4i}u_{4i},v_{4i}v_{4i+1}\}$ is dominated by $S_{i-1}$. By selecting  $u_{4i+1},v_{4i+2}v_{4i+3}$, every element of $G_i$  except $u_{4i+2}u_{4i+3},u_{4i+3},u_{4i+3}u_{4i+4}$ will be covered. Any selection for dominating $u_{4i+2}u_{4i+3}$, leads to redomination. It is clear that other choices also increase the redomination number.
		Again, by symmetry, one can show the same result for $\{u_{4i},v_{4i}u_{4i},u_{4i}u_{4i+1}\}$.
		
		\item[(d)]The set 
		$\{v_{4i},u_{4i}\}$ is dominated by $S_{i-1}$. To dominate $v_{4i}u_{4i}$, any selection  leads to redomination.
		
		\item[(e)]The set 
		$\{v_{4i}\}$ is dominated by $S_{i-1}$. To dominate $v_{4i}u_{4i}$, any selection except $u_{4i}u_{4i+1}$, increases the redomination number. However, $u_{4i}u_{4i+1}$ leads to a situation where the edge $v_{4i}v_{4i+1}$ is not dominated and then any selection to dominate this edge leads to redomination.
		
		\item[(f)]
		None of elements in $G_i$ is dominated by $S_{i-1}$. 
		$G_i$ contains $8$ vertices and $12$ edges and none of them are  dominated by previous block and every vertex in $P(n,1)$ has degree three.
		Every vertex in dominating set, dominates four vertices and three edges and every edge in dominating set, dominates two vertices and five edges.
		To dominate $G_i$ in this case, every edge dominates two vertices and five edges.
		We can easily investigate at least three dominating elements are needed and one redomination will be happen.
	\end{itemize}
	
	
	{\it Remark 1:}
	$\gmd(G_i)=3$ and $rd_S(G_i)\geq 1$.

	{\it Remark 2:}
	$\gmd(P(n,1))\geq 6m$ where $n=8m$ for some $m$.
	
	In the following discussion, we obtain the lower bound  for $\gmd(P(n,1))$ where $8\nmid n$.
	We consider a partitioning factor of $8$. In other words, we split $P(n,1)$ to consecutive blocks $V_0,V_1,\dots , V_{m-1}$ and the remained block $V_r$ (See Figure \ref{Blocks}).
	\begin{figure}[h!]
		\centering
		\includegraphics[scale=0.8]{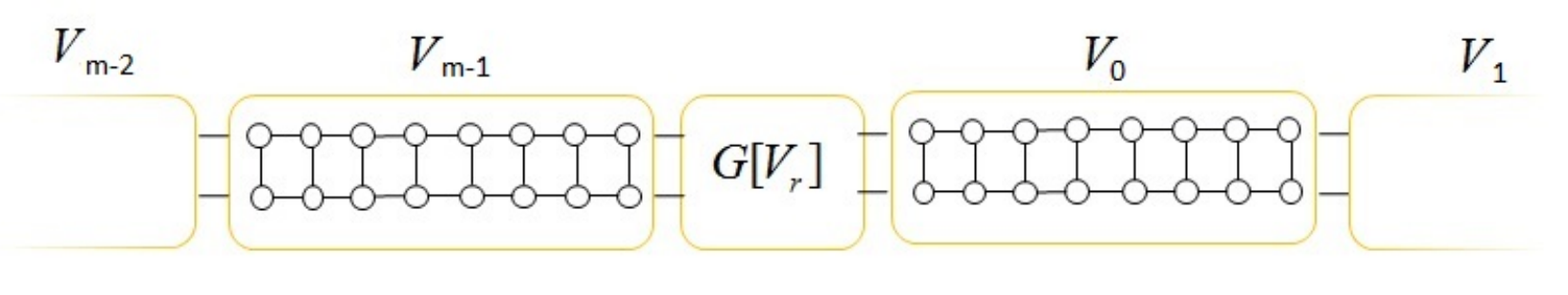}
		\caption{Blocks with  partitioning factor $8$ of $P(n,1)$ }\label{Blocks}
	\end{figure}

	We define the input and output elements of block $G[V_i^8]$, where $0\leq i\leq m-1$ (respectively, denoted by $I$ and $O$), as the set of elements that may be under the influence of the previous and next block. For instance, the input and output for block $V_0$ 
	are $I=\{v_0,u_0,v_0u_0,v_0v_1,u_0u_1\}$
	and 
	$O=\{v_7,u_7,v_7u_7,v_6v_7,u_6u_7\}$ that are specified by the red and blue colors, respectively (See Figure \ref{IO}).

	\begin{center}
		\begin{figure}[h!]
			\centering
			\begin{tikzpicture}[scale=0.7]
			
			\draw[thick,rounded corners=10pt] (-.5,.5) rectangle (7.5,-1.5);
			
			\vertex[color=red](p1) at (0,0) {};
			\vertex[](p2) at (1,0) {};
			\vertex[](p3) at (2,0) {};
			\vertex[](p4) at (3,0) {};
			\vertex[](p5) at (4,0) {};
			\vertex[](p6) at (5,0) {};
			\vertex[](p7) at (6,0) {};
			\vertex[color=blue](p8) at (7,0) {};
			
			\vertex[color=red](p1') at (0,-1) {};
			\vertex[](p2') at (1,-1) {};
			\vertex[](p3') at (2,-1) {};
			\vertex[](p4') at (3,-1) {};
			\vertex[](p5') at (4,-1) {};
			\vertex[](p6') at (5,-1) {};
			\vertex[](p7') at (6,-1) {};
			\vertex[color=blue](p8') at (7,-1) {};
			
			\tikzset{EdgeStyle/.style={-}}
			\Edge[color=red](p1)(p2)
			\Edge(p2)(p3)
			\Edge(p3)(p4)
			\Edge(p4)(p5)
			\Edge(p5)(p6)
			\Edge(p6)(p7)
			\Edge[color=blue](p7)(p8)
			
			\Edge[color=red](p1')(p2')
			\Edge(p2')(p3')
			\Edge(p3')(p4')
			\Edge(p4')(p5')
			\Edge(p5')(p6')
			\Edge(p6')(p7')
			\Edge[color=blue](p7')(p8')
			
			\Edge[color=red](p1)(p1')
			\Edge(p2)(p2')
			\Edge(p3)(p3')
			\Edge(p4)(p4')
			\Edge(p5)(p5')
			\Edge(p6)(p6')
			\Edge(p7)(p7')
			\Edge[color=blue](p8)(p8')            
			
			\end{tikzpicture}
			\caption{ 
				The input (\textcolor{red}{red}) and output (\textcolor{blue}{blue})} elements in a block\label{IO} 
		\end{figure}
	\end{center} 
	Our aim is to put a subset of $S_0\subseteq I$, as the initial mixed dominating set for block $V_i$ and find the effect of this input over output elements of the block when a mixed dominating set, like $S$, satisfies the following condition.
	
	\begin{itemize}
		\item[1)]$S_0\subseteq S$.
		
		\item[2)]$X=\{\upxi \in G[V_i^8] \mid |N_{V\cup E}[\upxi]|=7\}$ must be dominated.
		
		\item[3)]At most one re-dominating occurs by dominating set $S$.
	\end{itemize}
	
	For convenience without loss of generality, we discuss on block $V_0$.  The condition $3$  leads to $S_0$  be one of the following sets $$\{v_0\},\{u_0\},\{v_0u_0\},\{v_0v_1\},\{u_0u_1\}.$$

	By  brute-force search, easily investigate that the output for every input $S_0$ of the above sets is as follow:
	
	\begin{itemize}
		\item[$\bullet$]
		$S_0=\{v_0\},$ the output is $\{u_6u_7\}$,
		\item[$\bullet$]
		$S_0=\{u_0\},$ the output is $\{v_6v_7\}$,
		\item[$\bullet$]
		$S_0=\{v_0u_0\},$ according to condition $2$, it is impossible,
		\item[$\bullet$]
		$S_0=\{v_0v_1\},$ the output is $\{u_7\}$ or $\emptyset$,
		\item[$\bullet$]
		$S_0=\{u_0u_1\},$ the output is $\{v_7\}$ or $\emptyset$.
	\end{itemize}
	
	In the following, our aim is to glue $G[V_0^8]$ and $G[V_1^8]$  and prove that corresponding to every input, the output of $G[V_{1}^8]$ is similar to the output of $G[V_0^8]$. 

	Let $S_0=\{u_0\}(or \; \{v_0\})$, then the resulted output is $\{u_6u_7\} (\{v_6v_7\})$. Therefor none of  the elements in $G[V_1^8] \cup \{u_7,u_7u_8\}(G[V_1^8] \cup \{v_7,v_7v_8\})$ are dominated. So the best choice for mixed dominating set such that satisfy condition $2$, is $\{u_8\}(\{v_8\})$. This choice leads to  $G[V_1^8]$ has the same input as $G[V_0^8]$. Therefore the output of $G[V_0^8]$ and $G[V_1^8]$ are similar.
	
	Let $S_0=\{v_0v_1\}$ and the output be $\{u_7\}$. Therefore none of  the elements in $G[V_1^8]\setminus\{v_8\} \cup \{v_7v_8\}$ are dominated. So the best choice for every optimal mixed dominating set such that satisfy condition $2$, is $\{v_8v_{9}\}$. This choice leads to that $G[V_1^8]$ has the same input as $G[V_0^8]$.
	
	In the case $S_0=\{v_0v_1\},$ and the output is $\emptyset$ none of  the elements in $G[V_1^8] \cup \{v_7v_8,u_6u_7,u_7u_8,u_7\}$ are dominated. So the best choice for every  optimal mixed dominating set such that satisfy condition $2$, is $\{u_7u_8,v_8v_9\}$. This choice leads to that $G[V_1^8]$ has the same input as $G[V_0^8]$ and therefor the output is same.
	
	For $S_0=\{u_0u_1\},$ the proof is similar the the case $S_0=\{v_0v_1\}$.

	The above discussion is for two consecutive block. By inductively, we can extend the results to arbitrary number of consecutive blocks. 
	
	Now, we summarize the elements of $P(n,1)$ which are not dominated in the above process, in the following table:
	\begin{center}
		\begin{table}[h]
			\caption{Not Dominated Elements}\label{Not dominated Elements}
			\centering
			\begin{tabular}{|c|c|c|}
				\hline
				Input & Output & Not Dominated elements \\ 
				\hline
				$\{v_0\}$ & $\{u_{8m-2}u_{8m-1}\}$ & $ \{v_{8m-1}\} \cup G[V_r]\setminus \{v_{n-1}\}   $    \\
				\hline
				$\{u_0\}$ & $\{v_{8m-2}v_{8m-1}\}$ & $ \{u_{8m-1}\} \cup G[V_r]\setminus \{u_{n-1}\}   $    \\
				\hline
				$\{v_0v_1\}$ & $\{u_{8m-1}\}$ & $\{v_{8m-1}v_{8m},u_{n-1}u_0,u_0\}  \cup G[V_r]\setminus \{u_{8m}\}   $    \\
				\hline
				$\{v_0v_1\}$ & $\{\}$ &$\{u_{8m-2}u_{8m-1},v_{8m-1}u_{8m-1},u_{8m-1}u_{8m},u_{8m-1},v_{8m-1}v_{8m},u_0u_{n-1},u_0\}  \cup G[V_r]   $   \\
				\hline
				$\{u_0u_1\}$ & $\{v_{8m-1}\}$ & $\{u_{8m-1}u_{8m},v_0v_{n-1},v_0\}  \cup G[V_r]\setminus \{v_{8m}\}   $    \\
				\hline
				$\{u_0u_1\}$ & $\{\}$ &$\{v_{8m-2}v_{8m-1},u_{8m-1}v_{8m-1},v_{8m-1}v_{8m},v_{8m-1},u_{8m-1}u_{8m},v_0v_{n-1},v_0\}  \cup G[V_r]   $   \\	
				\hline 
			\end{tabular}  	 
		\end{table}
	\end{center}
	To dominate elements that are appeared in the third column of the above table, it is sufficient to know how many elements are needed to dominated remaining elements, is denoted by $\# rem(G)$ as shown Table \ref{number need}.

	\begin{center}
		\begin{table}[h]
			\caption{number of needed elements  to dominate remaining elements}\label{number need}
			\centering
			\begin{tabular}{|c|c|c|c|c|c|c|c|}
				\hline
				$n \mod 8$ & 1 & 2 & 3 & 4 & 5 & 6& 7\\
				\hline
				$\#rem(G)$ & 2 & 2 & 3 & 4 & 4 & 5 & 6 \\
				\hline
			\end{tabular}  	 
		\end{table}
	\end{center}
	Therefore  $\gmd(P(n,1))$ is at least equal to the given values in Equation  (\ref{Eq-1}).
\end{proof}

\subsection{The Case k = 2}
Let $t=4$, So $m= \lfloor \frac{n}{4} \rfloor $ and $\bar{r}=n \mod 4$.
\begin{theorem}
	Mixed domination number of $P(n,2)$ where $n\geq 4$ is
	\[
	\gmd(P(n,2))=\left\{\begin{array}{lll}
	3m & \text{if } & \bar{r}=0,  \\ \noalign{\medskip}
	3m+1 & \text{if }& \bar{r}=1,  \\ \noalign{\medskip}
	3m+2 & \text{if }& \bar{r}=2,  \\ \noalign{\medskip}
	3m+3 & \text{if }& \bar{r}=3,  \\ \noalign{\medskip}
	\end{array}\right.
	\]

\end{theorem}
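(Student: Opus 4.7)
The plan is to establish matching upper and lower bounds for $\gmd(P(n,2))$, mirroring the two-stage strategy of Theorem~\ref{Domination(n,1)}. For the upper bound, I would exhibit an explicit mixed dominating set using a period-$4$ pattern. For each full block set
\[
M_i := \{v_{4i}u_{4i},\ v_{4i+2},\ u_{4i+1}u_{4i+3}\},
\]
and let $M := \bigcup_{i=0}^{m-1} M_i$. A direct calculation would show that $M_i$ dominates every element of $G[V_i^4]$ together with the boundary edge $u_{4i+3}u_{4i+5}$, while the remaining two boundary edges $v_{4i+3}v_{4i+4}$ and $u_{4i+2}u_{4i+4}$ are both dominated by $v_{4i+4}u_{4i+4}\in M_{i+1}$ (through its incidence with $v_{4i+4}$ and $u_{4i+4}$); hence $M$ would be a mixed dominating set of $P(4m,2)$ of size $3m$, settling the case $\bar r=0$. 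For $\bar r\in\{1,2,3\}$ I would augment $M$ with $\bar r$ further elements placed in the residual positions $\{v_{4m+j},u_{4m+j}:0\le j<\bar r\}$ to reach total size $3m+\bar r$; the concrete choices, together with figures analogous to Figure~\ref{fig-P(n,1)}, would be provided in a short case-by-case check.

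For the lower bound I would invoke the counting identity $\gmd(G)=(5n+rd_S(V\cup E))/7$ established in Section~1. With partitioning factor $t=4$ and $G_i := G[V_i^4]\cup E^4_{i,i+1}$, each $G_i$ contains $8$ vertices, $9$ block-interior edges, and $3$ crossing edges, i.e.\ $20$ elements; since $|N_m[\upxi]|=7$, a naive count already suggests that $rd_S(G_i)\ge 7\cdot 3-20=1$ per block. To turn this heuristic into a proof I would adapt the six-case analysis of Theorem~\ref{Domination(n,1)}, classifying how $S_{i-1}$ can pre-dominate the input boundary of $G_i$ and showing in every case that completing $S$ to a mixed dominating set of $G_i$ forces at least one redomination in $G_i$. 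Summed over all $m$ blocks this yields $rd_S(V\cup E)\ge m$, hence $|S|\ge 3m$ when $\bar r=0$. For $\bar r\in\{1,2,3\}$ I would tabulate the residual ``not-dominated'' elements near the wrap-around, analogous to Table~\ref{Not dominated Elements}, and count how many additional elements of $S$ must live in the residual region to cover them.

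The principal obstacle is the per-block redomination bound: because the internal edges $u_iu_{i+2}$ of $P(n,2)$ span two positions rather than one, a single element of $S_{i-1}$ can already reach vertices two positions deep into $G_i$, so the input and output windows of a block are wider than in the $k=1$ setting and the number of boundary configurations to consider grows. A few configurations initially appear to permit $rd_S(G_i)=0$, and ruling them out will likely require examining two consecutive blocks jointly, in the spirit of the gluing argument that underlies Table~\ref{number need}. Once that case analysis is in place, the remainder count is routine and the stated formula follows immediately from the counting identity.
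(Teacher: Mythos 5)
Your plan matches the paper's proof essentially step for step: the upper bound uses the identical period-$4$ pattern $\{v_{4i}u_{4i},\,u_{4i+1}u_{4i+3},\,v_{4i+2}\}$ with a few extra elements (the paper uses spokes $v_ju_j$) in the residual block, and the lower bound rests on the same per-block claim $rd_S(G_i)\geq 1$ established by classifying how $S_{i-1}$ can pre-dominate the boundary of $G_i$, followed by the same count of elements forced into the residual block. The boundary case analysis you defer is exactly the content of the paper's $27$-case table for $k=2$, so your outline is correct and does not diverge from the published argument.
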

\begin{proof}
	We consider two cases:
	
	{\it Case 1:} $\gmd(P(n,2))$ is at most equal to the given values and it is clear because we can construct a mixed dominating set $S$ as the following:
	
	\[
	S=\left\{\begin{array}{lll}
	M & \text{if } & \bar{r}=0,  \\ \noalign{\medskip}
	M\cup \{v_{4m+1}u_{4m+1}\} & \text{if }& \bar{r}=1,  \\ \noalign{\medskip}
	M\cup \{v_{4m+1}u_{4m+1},v_{4m+2}u_{4m+2}\} & \text{if }& \bar{r}=2,  \\ \noalign{\medskip}
	M\cup \{v_{4m+1}u_{4m+1},v_{4m+2}u_{4m+2},v_{4m+3}u_{4m+3}\} & \text{if }& \bar{r}=3,  \\ \noalign{\medskip}
	\end{array}\right.
	\]
	where \[M=\{v_{4i}u_{4i},u_{4i+1}u_{4i+3},v_{4i+2} :  0\leq i\leq m-1\}.\]

	In Figure \ref{fig-P(n,2)}, we show the dominating sets of $P(n, 2)$ for $8 \leq n \leq 11$, where the domination elements of $S$ are in dark.
	\begin{center}
		\begin{figure}[h]
			\begin{subfigure}[b]{0.23\textwidth}
				\centering
				\resizebox*{.8\textwidth}{!}{
					\begin{tikzpicture}[rotate=90]
					\tikzstyle{VertexStyle} = [shape = classic]
					\tikzstyle{EdgeStyle}= [thin,double= black,double distance= 2pt]
					\SetVertexNoLabel
					\SetVertexNormal[LineWidth=1pt]
					\grGeneralizedPetersen[Math,RA=5]{8}{2}
					\AddVertexColor{black}{a2,a6}
					\draw[draw=black,line width=4pt]
					(a0) to (b0);
					\draw[draw=black,line width=4pt]
					(a4) to (b4);
					\draw[draw=black,line width=4pt]
					(b1) to (b3);
					\draw[draw=black,line width=4pt]
					(b5) to (b7); 
					\end{tikzpicture}}
				\caption{$P(8,2)$}             
			\end{subfigure}
			\begin{subfigure}[b]{0.23\textwidth}
				\centering
				\resizebox*{.8\textwidth}{!}{
					\begin{tikzpicture}[rotate=90]
					\tikzstyle{VertexStyle} = [shape = classic]
					\tikzstyle{EdgeStyle}= [thin,double= black,double distance= 2pt]
					\SetVertexNoLabel
					\SetVertexNormal[LineWidth=1pt]
					\grGeneralizedPetersen[Math,RA=5]{9}{2}
					\AddVertexColor{black}{a7,a3}
					\draw[draw=black,line width=4pt]
					(a0) to (b0);
					\draw[draw=black,line width=4pt]
					(a5) to (b5);
					\draw[draw=black,line width=4pt]
					(b8) to (b6);
					\draw[draw=black,line width=4pt]
					(b4) to (b2);
					\draw[draw=black,line width=4pt]
					(a1) to (b1);
					\end{tikzpicture}}
				\caption{$P(9,2)$}          
			\end{subfigure}
			\begin{subfigure}[b]{0.23\textwidth}
				\centering
				\resizebox*{.8\textwidth}{!}{
					\begin{tikzpicture}[rotate=90]
					\tikzstyle{VertexStyle} = [shape = classic]
					\tikzstyle{EdgeStyle}= [thin,double= black,double distance= 2pt]
					\SetVertexNoLabel
					\SetVertexNormal[LineWidth=1pt]
					\grGeneralizedPetersen[Math,RA=5]{10}{2}
					\AddVertexColor{black}{a8,a4}
					\draw[draw=black,line width=4pt]
					(a0) to (b0);
					\draw[draw=black,line width=4pt]
					(a6) to (b6);
					\draw[draw=black,line width=4pt]
					(b9) to (b7);
					\draw[draw=black,line width=4pt]
					(b5) to (b3);
					\draw[draw=black,line width=4pt]
					(a2) to (b2);
					\draw[draw=black,line width=4pt]
					(a1) to (b1);
					\end{tikzpicture}}
				\caption{$P(10,2)$} \label{c}               
			\end{subfigure}
			\begin{subfigure}[b]{0.23\textwidth}
				\centering
				\resizebox*{.8\textwidth}{!}{
					\begin{tikzpicture}[rotate=90]
					\tikzstyle{VertexStyle} = [shape = classic]
					\tikzstyle{EdgeStyle}= [thin,double= black,double distance= 2pt]
					\SetVertexNoLabel
					\SetVertexNormal[LineWidth=1pt]
					\grGeneralizedPetersen[Math,RA=5]{11}{2}
					\AddVertexColor{black}{a9,a5}
					\draw[draw=black,line width=4pt]
					(a0) to (b0);
					\draw[draw=black,line width=4pt]
					(a7) to (b7);
					\draw[draw=black,line width=4pt]
					(b10) to (b8);
					\draw[draw=black,line width=4pt]
					(b6) to (b4);
					\draw[draw=black,line width=4pt]
					(a3) to (b3);
					\draw[draw=black,line width=4pt]
					(a2) to (b2);
					\draw[draw=black,line width=4pt]
					(a1) to (b1);
					\end{tikzpicture}}                                       
				\caption{$P(11,2$}                
			\end{subfigure}
			\caption{Mixed dominating set of $P(n,2)$, where $8\leq n \leq 11$}\label{fig-P(n,2)}
		\end{figure}
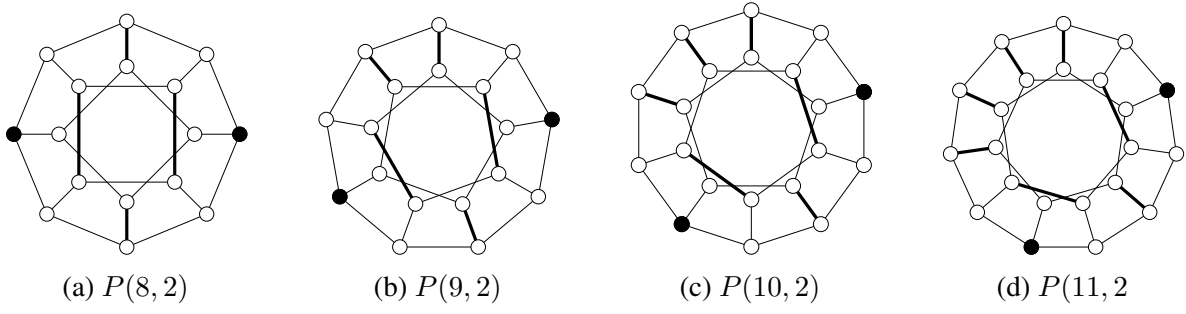       
	\end{center}

	{\it Case 2:} $\gmd(P(n,2))$ is at least equal to the given values.
	To prove this claim we need the following lemma.
	
	\begin{lemma}\label{Lemma-p(n,2)}
		Let $S$ be an arbitrary mixed dominating set of $P(n, 2)$, then for every 4-block $G_i$ ,where $0 \leq i \leq m-1 $, we have $rd_S(G_i)\geq 1.$ 
	\end{lemma}
	\begin{proof}
		Let $G_{i-1},G_i$ be two consecutive blocks. For every mixed dominating set like $S$, we define $S_i=S \cap G_i.$ 
		
		As the blocks $G_{i-1}$ and $G_i$ are connected, then some elements in $S_{i-1}$ may be dominated by some elements in $G_i$. Independently of which elements of $V_{i-1}$ included in $S_{i-1}$, one of the 27 cases which are appeared in Table \ref{Not dominated Elements for 2} will be occurred. As Theorem \ref{Domination(n,1)},  we can easily investigate that every minimum mixed dominating set have at least one re-dominating. For example, let input be  $$I_1=\{v_{4i},u_{4i},v_{4i}u_{4i},v_{4i}v_{4i+1},u_{4i}u_{4i+2},u_{4i+1},u_{4i+1}u_{4i+3},v_{4i+1}u_{4i+1}\},$$ 
		
		 then every elements except $v_{4i+2}$ leads increasing re-domination number and if $v_{4i+2}$ is selected, then increasing in the re-domination number occurs in the remaining edges.
		
		When the input be $I_2, \cdots, I_{27}$, the discussion is similar as $I_1$.
	\end{proof}

	\begin{center}
		\begin{table}[!h]
			\caption{Not Dominated Elements}\label{Not dominated Elements for 2}
			\centering
			\resizebox{\textwidth}{!}{
				\begin{tabular}{ll}
					
					$I_{1}=\{v_{4i}, u_{4i}, u_{4i+1}, v_{4i}v_{4i+1}, v_{4i}u_{4i}, u_{4i}u_{4i+2}, u_{4i+1}u_{4i+3}, v_{4i+1}u_{4i+1}\}$ & $I_{15}=\{v_{4i}, u_{4i}\}$
					\\
					$I_{2}=\{v_{4i}, u_{4i}, u_{4i+1}, v_{4i}v_{4i+1}, v_{4i}u_{4i}, u_{4i}u_{4i+2}\}$ & $I_{16}=\{v_{4i},  u_{4i+1}, u_{4i+1}u_{4i+3}, v_{4i+1}u_{4i+1}\}$
					\\
					$I_{3}=\{v_{4i}, u_{4i}, v_{4i}v_{4i+1}, v_{4i}u_{4i}, u_{4i}u_{4i+2}\}$ &  $I_{17}=\{v_{4i}, u_{4i}\}$
					\\
					$I_{4}=\{v_{4i}, u_{4i}, u_{4i+1}, v_{4i}v_{4i+1}, v_{4i}u_{4i},  u_{4i+1}u_{4i+3}, v_{4i+1}u_{4i+1}\}$ & $I_{18}=\{v_{4i}\}$
					\\
					$I_{5}=\{v_{4i}, u_{4i}, u_{4i+1}, v_{4i}v_{4i+1}, v_{4i}u_{4i}\}$ & $I_{19}=\{ u_{4i}, u_{4i+1},  v_{4i}u_{4i}, u_{4i}u_{4i+2}, u_{4i+1}u_{4i+3}, v_{4i+1}u_{4i+1}\}$
					\\
					$I_{6}=\{v_{4i}, u_{4i},  v_{4i}v_{4i+1}, v_{4i}u_{4i}, u_{4i}u_{4i+2}\}$ &   $I_{20}=\{ u_{4i}, u_{4i+1},  v_{4i}u_{4i}, u_{4i}u_{4i+2}\}$
					\\
					$I_{7}=\{v_{4i},  u_{4i+1}, v_{4i}v_{4i+1}, v_{4i}u_{4i}, u_{4i}u_{4i+2},  v_{4i+1}u_{4i+1}\}$ & $I_{21}=\{ u_{4i},   v_{4i}u_{4i}, u_{4i}u_{4i+2}\}$
					\\
					$I_{8}=\{v_{4i},  u_{4i+1}, v_{4i}v_{4i+1}, v_{4i}u_{4i}, \}$ & $I_{22}=\{ u_{4i}, u_{4i+1},  u_{4i+1}u_{4i+3}, v_{4i+1}u_{4i+1}\}$
					\\
					$I_{9}=\{v_{4i},  v_{4i}v_{4i+1}, v_{4i}u_{4i}, \}$ & $I_{23}=\{ u_{4i}, u_{4i+1}\}$
					\\
					$I_{10}=\{v_{4i}, u_{4i}, u_{4i+1},  v_{4i}u_{4i}, u_{4i}u_{4i+2}, u_{4i+1}u_{4i+3}, v_{4i+1}u_{4i+1}\}$&
					$I_{24}=\{ u_{4i}\}$
					\\
					$I_{11}=\{v_{4i}, u_{4i}, u_{4i+1},  v_{4i}u_{4i}, u_{4i}u_{4i+2}\}$  & 
					$I_{25}=\{  u_{4i+1}, u_{4i+1}u_{4i+3}, v_{4i+1}u_{4i+1}\}$
					\\
					$I_{12}=\{v_{4i}, u_{4i},   v_{4i}u_{4i}, u_{4i}u_{4i+2}\}$ &
					$I_{26}=\{ u_{4i+1}\}$ \\
					$I_{13}=\{v_{4i}, u_{4i}, u_{4i+1},  u_{4i+1}u_{4i+3}, v_{4i+1}u_{4i+1}\}$& 
					$I_{27}=\{ \}$
					\\
					$I_{14}=\{v_{4i}, u_{4i}, u_{4i+1}\}$&     
					\\
					
				\end{tabular}  }	 
			\end{table}
		\end{center}
		
		\begin{corollary}
			Let $n$ be the multiple of four. Then $\gmd(P(n,2))=\frac{3n}{4}$.
		\end{corollary}
		
		In the following, we split $P(n,2)$ to consecutive 4-blocks $V_0,V_1,\dots , V_{m-1}$ 
	and $V_r$ where  $V_r$ is the remaining block (see Figure \ref{IO-fig}).
		\begin{figure}[h!]
			\centering
			\includegraphics[scale=0.8]{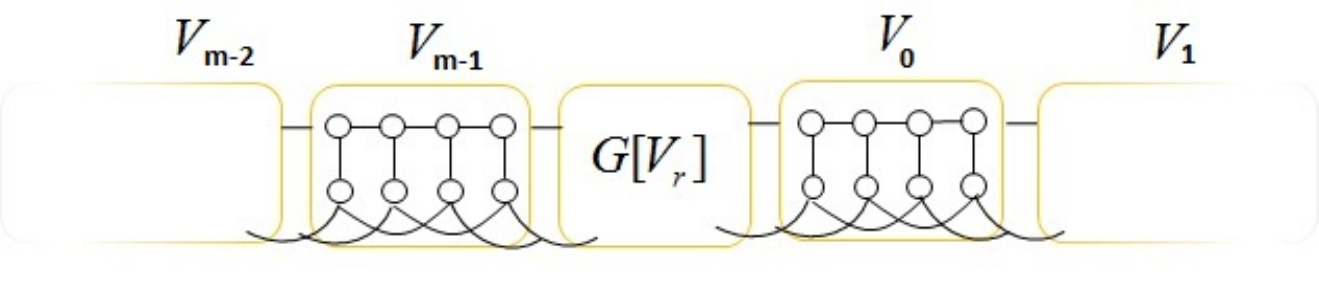}
			\caption{Blocks with  partitioning factor $4$ of $P(n,2)$ }\label{IO-fig}
		\end{figure}
		Each blocks of $V_0, \cdots, V_{m-1}$ has one re-dominating number that can dominate an extra edge.
		According to this  fact, it is possible at most two of edges in remaining block can be covered by the blocks $V_0$ and $V_{m-1}$.
		We check three possible case that $G[V_r]$ can be get.
		
		{\it a) n=4k+1:}
		The block $V_r$ contains two vertices and three edges. At most two of its edges can be covered by blocks $V_0$ and $V_{m-1}$, so we need one more vertex or edge to cover remaining elements of $G[V_r]$.
		
		{\it b) n=4k+2:}
		The block $V_r$ contains four vertices and six edges, so we need two more elements  to cover remaining elements.
		
		{\it c) n=4k+3:}
		The block $V_r$ contains six vertices and nine edges, so we need three more elements  to cover remaining elements.
		
		Therefore, the proposed  lower bound in case 2 can be concluded.
	\end{proof}
	
	\begin{remark}
		It is obvious that there exist different patterns which by using them, we can find minimum mixed dominating set for some $n$ in $P(n,2)$. For example, let $m= \lfloor \frac{n}{8} \rfloor $ and $\bar{r}=n \mod 8$, so we can  split graph into 8-blocks and select the below set in each block.
		\[M=\{u_{8i},v_{8i+1}v_{8i+2},u_{8i+3},v_{8i+4}u_{8i+4},v_{8i+5}v_{8i+6},v_{8i+7}u_{8i+7} :  0\leq i\leq m-1\}.\]
		So, the mixed domination number of $(P(n,2))$ is
		\[
		\gmd(P(n,2))=\left\{\begin{array}{lll}
		6m & \text{if } & \bar{r}=0,  \\ \noalign{\medskip}
		6m+2 & \text{if }& \bar{r}=1,2,  \\ \noalign{\medskip}
		6m+3 & \text{if }& \bar{r}=3,  \\ \noalign{\medskip}
		6m+4 & \text{if }& \bar{r}=4,5,  \\ \noalign{\medskip}
		6m+5 & \text{if }& \bar{r}=6,  \\ \noalign{\medskip}
		6m+6 & \text{if }& \bar{r}=7,  \\ \noalign{\medskip}
		\end{array}\right.
		\]
		It is clear that this pattern is correct for all $n$ except when $\bar{r}={1,4}$ and for these values the previous pattern is suitable.  
	\end{remark}

	\subsection{The Case $k\geq 3:$ }
	In this section, we generally present  an upper bound for $\gmd(P(n,k))$ and also we will propose a conjecture that this bound is the exact value of  mixed domination number of generalized Petersen graph. 
	
	For convenience, we let $T=4k'+1$ where $k=2k'$ or $k=2k'+1$,
	also $m=\lfloor \frac{n}{T} \rfloor$ and $\bar{r}=n \mod T$. We split graph to blocks $V_0,V_1,\cdots, V_{m-1} $ with partitioning factor $T$ and the remaining of graph by $V_r$.
	
	\begin{theorem}
		Let $k\geq 3$. The mixed domination number of $P(n,k)$ is
		\[
		\gmd(P(n,k))\leq \left\{\begin{array}{lll}
		(3k'+1).m + \bar{r} & \text{if } & \bar{r} \text{  is evev},  \\ \noalign{\medskip}
		
		(3k'+1).m + k' +  \frac{\bar{r}+1}{2} & \text{if }&  \bar{r} \text{  is odd}.  \\ \noalign{\medskip}
		\end{array}\right.
		\]
	\end{theorem}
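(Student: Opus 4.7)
The claim is an upper bound, so the strategy is constructive: I will exhibit a mixed dominating set $S$ of $P(n,k)$ whose cardinality equals the right-hand side. Paralleling the proofs for $k=1$ and $k=2$, I would partition the index set $\{0,1,\ldots,n-1\}$ into $m$ full blocks of length $T=4k'+1$ plus a remainder block $V_r$ of length $\bar{r}$. Inside each full block $V_i^T$ I would place a repeating pattern $M_i$ of exactly $3k'+1$ elements; inside $V_r$ I would place $\bar{r}$ extra elements when $\bar{r}$ is even and $k'+(\bar{r}+1)/2$ extra elements when $\bar{r}$ is odd. Then $S:=\bigcup_i M_i$ together with the extras is the candidate dominating set.

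The design of the pattern $M_i$ is the engine of the proof. Each internal edge $u_a u_{a+k}$ covers, besides itself, two inner vertices, two incident rungs, and four further inner edges, so such edges are the most economical dominators. A natural candidate is to take roughly $k'$ disjoint internal edges of the form $u_{iT+2j}u_{iT+2j+k}$, together with about $k'$ external edges $v_{iT+2j+1}v_{iT+2j+2}$ covering the outer rim, and one spoke $v_{iT}u_{iT}$ to bridge the seam between consecutive blocks, giving $3k'+1$ dominators per block. The two parities $k=2k'$ and $k=2k'+1$ require slightly different placements of the inner edges, because the shift-by-$k$ action on inner indices has one or two orbits modulo $n$ accordingly; the block length $T=4k'+1$ is chosen precisely so that after $T$ steps the inner edges realign and the pattern can be repeated.

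After $M_i$ is fixed, three local verifications suffice. First, $M_i$ together with a few "input" elements on its left boundary coming from $M_{i-1}$ must dominate all of $G[V_i^T]$; this is a finite check independent of $i$. Second, the union $M_{i-1}\cup M_i$ must dominate every cross-edge in $E_{i-1,i}^T$ (including the inner edges that leap over the seam). Third, the extras placed in $V_r$, combined with $M_{m-1}$ and $M_0$, must dominate every element that touches $V_r$. The parity split in the statement arises at this third step: when $\bar{r}$ is even, placing rungs $v_{mT+j}u_{mT+j}$ across $V_r$ suffices and costs $\bar{r}$ elements; when $\bar{r}$ is odd, the inner edges entering $V_r$ from $V_{m-1}$ and leaving toward $V_0$ create an odd residue that rungs alone cannot absorb, forcing the use of $k'$ inner edges plus $(\bar{r}+1)/2$ outer edges, which is exactly the stated extra count.

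The main obstacle is producing a pattern uniform in $k\ge 3$ and verifying that it stitches together across block boundaries without leaving any element undominated. Since $k$ can be comparable to $T$, the inner edges $u_au_{a+k}$ leap into the next block, and the interaction between successive copies of the pattern must be checked carefully; this verification has to be carried out separately for $k$ even and $k$ odd. A secondary difficulty is bookkeeping: the pattern $M_i$ must use exactly $3k'+1$ elements, and the extras in $V_r$ must match the claimed formula on the nose, so one cannot afford any "wasted" dominator inside a full block. Once these local checks are in place, summing over the $m$ full blocks and adding the extras in $V_r$ yields an $S$ of the required cardinality, which establishes the stated upper bound on $\gmd(P(n,k))$.
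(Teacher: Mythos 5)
Your overall strategy is the same as the paper's: split the indices into $m$ blocks of length $T=4k'+1$ plus a remainder $V_r$, place a fixed pattern of $3k'+1$ dominators in each full block, and add extras in $V_r$ whose number depends on the parity of $\bar{r}$. However, the heart of such a constructive proof is the explicit pattern, and here your proposal has a genuine gap. The pattern you describe --- $k'$ internal edges $u_{iT+2j}u_{iT+2j+k}$, $k'$ external edges, and one spoke --- contains only $2k'+1$ elements, not the $3k'+1$ you claim. This is not just a bookkeeping slip: since every element of $P(n,k)$ dominates exactly $7$ elements and a block accounts for $5T=20k'+5$ elements of the graph, any per-block budget must be at least $\frac{20k'+5}{7}>2k'+1$ for $k'\geq 1$, so the pattern as listed cannot possibly dominate its block no matter how it is stitched to its neighbours. (Your premise that internal edges are ``the most economical dominators'' is also off --- in a cubic graph every vertex and every edge dominates exactly $7$ elements --- and an edge $u_au_{a+k}$ dominates two rungs but only two further inner edges, not four.) The paper's pattern is in fact quite different in composition: per block it takes $k'$ \emph{inner vertices} $u$, $k'$ spokes $v u$, $k'$ outer edges $vv$, plus one additional spoke (with the indices shifted by one when $k$ is odd), and no inner edges at all; the verification then relies on the inner vertices of one block dominating leftover inner elements of the neighbouring block.

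The treatment of the remainder block is likewise only asserted at the level of the target counts ($\bar{r}$ extras for even $\bar{r}$, $k'+\frac{\bar{r}+1}{2}$ for odd $\bar{r}$) rather than constructed: the paper has to give explicit extra sets, split not only by the parities of $k$ and $\bar{r}$ but also by whether $\bar{r}\leq 2k'$ or $\bar{r}>2k'$, and in some odd-$\bar{r}$ cases the extras reach back into $V_{m-1}$ (elements of the form $u_{Tm-2i-2}$). So to complete your argument you would need to (i) replace your per-block pattern by one that genuinely has $3k'+1$ elements and verify, separately for even and odd $k$, that consecutive copies dominate all inner edges and vertices spanning the seams, and (ii) exhibit the concrete extra elements for $V_r$ in each parity/size regime and check they dominate everything left over while matching the stated counts. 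As it stands, the proposal reproduces the paper's framework but not a working construction.
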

	\begin{proof}
		It is sufficient to construct a mixed domination with this bound.
		To show these inequalities, we consider two cases.

%
%
		 {\it Case 1:} $\bar{r}=0$
				
		In this case, for block $j$ where $0\leq j\leq m-1$ we define the set $M_j$ as
		\[M_j=\{u_{T\times j + 2i},v_{T\times j + 2i+1}u_{T\times j + 2i+1},v_{T\times j + 2k'+2i}v_{T\times j + 2k'+2i+1} :  0\leq i < k'\}\cup \{v_{T\times j + 4k'}u_{T\times j + 4k'}\},\]
		where $k$ is even and for odd $k$,
		\[M_j=\{u_{T\times j + 2i+1},v_{T\times j + 2i+2}u_{T\times j + 2i+2},v_{T\times j + 2k'+2i+1}v_{T\times j + 2k'+2i+2} :  0\leq i < k'\}\cup \{v_{T\times j }u_{T\times j  }\}.\]
		
		It is easy to see that $S$ is the mixed dominating set and  each block has  $3k'+1$ elements in mixed dominating set. In the following figures we show a mixed dominating sets for $P(27,4)$ and $P(27,5)$.

		\begin{center}\label{P(27,4-5)}
			\resizebox{.60\textwidth}{!}{
				$
				\begin{array}{ccc}
				\begin{tikzpicture}[ every node/.style={draw,circle,inner sep=0pt}] 
				\graph[clockwise, radius=8cm] {subgraph C_n [n=27,name=A] };
				\graph[clockwise, radius=6cm] {subgraph I_n [n=27,name=B] };
				\foreach \i 
				in {1,2,3,4,5,6,7,8,9,10,11,12,13,14,15,16,17,18,19,20,21,22,23,24,25,26,27}{
					\AddVertexColor{white}{A \i}
					\AddVertexColor{white}{B \i}
				}
				
				\foreach \i [evaluate={\j=int(mod(\i+3,27)+1)}]
				in {1,2,3,4,5,6,7,8,9,10,11,12,13,14,15,16,17,18,19,20,21,22,23,24,25,26,27}{
					\draw (A \i) -- (B \i);
					\draw (B \j) -- (B \i);
				}
				\foreach \i 
				in {1,3,10,12,19,21}{
					\AddVertexColor{black}{B \i}}
				
				\foreach \i 
				in {2,4,9,11,13,18,20,22,27}{
					\draw[draw=black,line width=3pt] (A \i) -- (B \i);	}
				\foreach \i [evaluate={\j=int(mod(\i+1,27))}]
				in {5,7,14,16,23,25}{
					\draw[draw=black,line width=3pt] (A \i) -- (A \j);	}
				\end{tikzpicture}
				& \,\,
				&
				\begin{tikzpicture}[rotate=1,every node/.style={draw,circle,inner sep=0pt}] 
				\graph[clockwise, radius=8cm] {subgraph C_n [n=27,name=A] };
				\graph[clockwise, radius=6cm] {subgraph I_n [n=27,name=B] };
				\foreach \i 
				in {1,2,3,4,5,6,7,8,9,10,11,12,13,14,15,16,17,18,19,20,21,22,23,24,25,26,27}{
					\AddVertexColor{white}{A \i}
					\AddVertexColor{white}{B \i}
				}
				\foreach \i 
				in {2,4,11,13,20,22}{
					\AddVertexColor{black}{B \i}}
				\foreach \i [evaluate={\j=int(mod(\i+4,27)+1)}]
				in {1,2,3,4,5,6,7,8,9,10,11,12,13,14,15,16,17,18,19,20,21,22,23,24,25,26,27}{
					\draw (A \i) -- (B \i);
					\draw (B \j) -- (B \i);
				}	
				
				\foreach \i 
				in {1,3,5,10,12,14,19,21,23}{
					\draw[draw=black,line width=3pt] (A \i) -- (B \i);	}
				\foreach \i [evaluate={\j=int(mod(\i+1,28))}]
				in {6,8,15,17,24,26}{
					\draw[draw=black,line width=3pt] (A \i) -- (A \j);	}
				\end{tikzpicture}
				\\
				\\
				
				${\Huge P(27,4)}$ &\,\, &  ${\Huge P(27,5)}$  \\
				
				\end{array} 
				$
			}
		\end{center}
		
		{\it Case 2:} $\bar{r}\neq 0$
		
		For each block $V_0,V_1,\dots ,V_{m-1}$, we set a mixed dominating set similar to case 1 and for remaining block $V_r$ according to the values of $k$  and $\bar{r}$, we add appropriate elements to mixed dominating set as in the tables \ref{$V^r1$} and \ref{$V^r2$}:
\begin{center}
	\begin{table}[h!]
		\caption{Elements needed for dominating undominated elements in $V_r$ and $V_{m-1}$ when $\bar{r}\leq 2k'$}\label{$V^r1$}
		\centering
		\resizebox{\textwidth}{!}{
		\begin{tabular}{|l|c|c|} 		
			
			\cline{1-3} $k$ is even & $\bar{r}$ is even &
			$\{u_{T\times m + 2i},v_{T\times m + 2i+1}u_{T\times m + 2i+1} :  0\leq i< \frac{\bar{r}}{2}\} $ \\ 
			\cline{2-3}  &  $\bar{r}$ is odd  & $\{u_{T\times m + 2i},v_{T\times m + 2i+1}u_{T\times m + 2i+1} :  0\leq i < \frac{\bar{r}-1}{2}\}\cup \{v_{T\times m + \bar{r}-1}u_{T\times m + \bar{r}-1}\}\cup \{u_{T\times m-2i-2}:0\leq i < \frac{2k'-\bar{r}-1}{2}\}$\\ 
			\cline{1-3} $k$ is odd & $\bar{r}$ is even  &  $\{u_{T\times m + 2i+1},v_{T\times m + 2i}u_{T\times m + 2i} :  0\leq i < \frac{\bar{r}}{2}\}$\\ 
			\cline{2-3}  & $\bar{r}$ is odd &  $\{u_{T\times m + 2i+1},v_{T\times m + 2i}u_{T\times m + 2i} :  0\leq i < \frac{\bar{r}-1}{2}\}\cup \{v_{T\times m + \bar{r}-1}u_{T\times m + \bar{r}-1}\}\cup \{u_{T\times m-2i-2}: 0\leq i < \frac{2k'-\bar{r}-1}{2}\}$  \\
			\cline{1-3} 
		\end{tabular} 
	}
	\end{table}
\end{center}

\begin{center}
	\begin{table}[h!]  
		\caption{Elements needed for dominating undominated elements in $V_r$ and $V_{m-1}$ when $\bar{r}> 2k'$}\label{$V^r2$}
		\centering
		\resizebox{\textwidth}{!}{
		\begin{tabular}{|l|c|c|} 		
			\cline{1-3} $k$ is even & $\bar{r}$ is even &  $\{u_{T\times m + 2i},v_{T\times m + 2i+1}u_{T\times m + 2i+1} :  0\leq i < \frac{\bar{r}}{2}\} $\\ 
			\cline{2-3}  &  $\bar{r}$ is odd  &
			$\{u_{T\times m + 2i},v_{T\times m + 2i+1}u_{T\times m + 2i+1} :  0\leq i < k'\}\cup \{v_{T\times m + 2k'+2i}v_{T\times m + 2k'+2i+1}: 0\leq i < \frac{\bar{r}-2k'+1}{2}\}$	\\ 
			\cline{1-3} $k$ is odd & $\bar{r}$ is even  &  $\{u_{T\times m + 2i+1},v_{T\times m + 2i}u_{T\times m + 2i} :  0\leq i < \frac{\bar{r}}{2}\}$ \\ 
			\cline{2-3}  & $\bar{r}$ is odd &  $\{u_{T\times m + 2i+2},v_{T\times m + 2i+1}u_{T\times m + 2i+1} :  0\leq i < k'\}\cup \{v_{T\times m }u_{T\times m }\}\cup \{v_{T\times m + 2k'+2i+1}v_{T\times m + 2k'+2i+2}: 0\leq i < \frac{\bar{r}-2k'-1}{2}\}$ \\
			\cline{1-3} 
		\end{tabular} 
	}
	\end{table}
\end{center}
		
	\end{proof}
\section{Conclusion and open problems} \label{sec:conclusion}
In this paper, we investigated the exact value of $\gmd(P(n,1))$ and $\gmd(P(n,2))$ and also we  proved their correctness. The  upper bounds for $\gmd(P(n, k))$ where $k\geq 3$ is proposed.  According to the proposed upper bound, we gave a conjecture that this upper bound  is exact value for $\gmd(P(n, k))$.

As a further study, it is interesting to answer the following questions:
\begin{enumerate}
	\item[-]Exact value of $\gmd(P(n, k))$ where $k\geq 3$.
	\item[-]Enumerate the number of different mixed dominating set in Petersen graphs.
	\item[-]As the generalized Petersen graphs are a particular cases of $\mathcal{I}-$graphs. It is also interesting to find out the exact value of the mixed domination number of $\mathcal{I}-$graphs.
	\item[-]Design constructive or algorithmic method to find $\gmd$ for other classes of graph like intervals, permutation graphs and etc.
\end{enumerate}

\section*{Acknowledgment}
The authors are grateful to A. K. Goharshady for their constructive comments and suggestions on improving our paper.


\begin{thebibliography}{10}
	
	\bibitem{alavi1977total}
	{\sc Alavi, Y., Behzad, M., Lesniak-Foster, L.~M., and Nordhaus, E.}
	\newblock Total matchings and total coverings of graphs.
	\newblock {\em Journal of Graph Theory 1}, 2 (1977), 135--140.
	
	\bibitem{behzad2008domination}
	{\sc Behzad, A., Behzad, M., and Praeger, C.~E.}
	\newblock On the domination number of the generalized petersen graphs.
	\newblock {\em Discrete Mathematics 308}, 4 (2008), 603--610.
	
	\bibitem{ebrahimi2009vertex}
	{\sc Ebrahimi, B.~J., Jahanbakht, N., and Mahmoodian, E.~S.}
	\newblock Vertex domination of generalized petersen graphs.
	\newblock {\em Discrete Mathematics 309}, 13 (2009), 4355--4361.
	
	\bibitem{fu2009domination}
	{\sc Fu, X., Yang, Y., and Jiang, B.}
	\newblock On the domination number of generalized petersen graphs p (n, 2).
	\newblock {\em Discrete mathematics 309}, 8 (2009), 2445--2451.
	
	\bibitem{hedetniemi1995domination}
	{\sc Hedetniemi, S.~M., Hedetniemi, S.~T., Laskar, R., McRae, A., and Majumdar,
		A.}
	\newblock Domination, independence and irredundance in total graphs: a brief
	survey.
	\newblock In {\em Graph Theory, Combinatorics and Applications: Proceedings of
		the 7th Quadrennial International Conference on the Theory and Applications
		of Graphs\/} (1995), vol.~2, pp.~671--683.
	
	\bibitem{lan2013mixed}
	{\sc Lan, J.~K., and Chang, G.~J.}
	\newblock On the mixed domination problem in graphs.
	\newblock {\em Theoretical Computer Science 476\/} (2013), 84--93.
	
	\bibitem{majumdar1992neighborhood}
	{\sc Majumdar, A.}
	\newblock {\em Neighborhood hypergraphs: a framework for covering and packing
		parameters in graphs}.
	\newblock PhD thesis, Clemson University, 1992.
	
	\bibitem{manlove1999algorithmic}
	{\sc Manlove, D.~F.}
	\newblock On the algorithmic complexity of twelve covering and independence
	parameters of graphs.
	\newblock {\em Discrete Applied Mathematics 91}, 1-3 (1999), 155--175.
	
	\bibitem{yan2009exact}
	{\sc Yan, H., Kang, L., and Xu, G.}
	\newblock The exact domination number of the generalized Petersen graphs.
	\newblock {\em Discrete mathematics 309}, 8 (2009), 2596--2607.
	
	\bibitem{zhao2011algorithmic}
	{\sc Zhao, Y., Kang, L., and Sohn, M.~Y.}
	\newblock The algorithmic complexity of mixed domination in graphs.
	\newblock {\em Theoretical Computer Science 412}, 22 (2011), 2387--2392.
	
\end{thebibliography}

\end{document}